\DeclareMathOperator{\RR}{\mathbb{R}}
\DeclareMathOperator{\QQ}{\mathbb{Q}}
\DeclareMathOperator{\ZZ}{\mathbb{Z}}
\DeclareMathOperator{\BB}{\mathbb{B}}
\DeclareMathOperator{\FF}{\mathbb{F}}
\DeclareMathOperator{\PC}{\mathcal{P}}
\DeclareMathOperator{\LC}{\mathcal{L}}
\DeclareMathOperator{\AC}{\mathcal{A}}
\DeclareMathOperator{\IC}{\mathcal{I}}
\DeclareMathOperator{\rank}{rank}
\DeclareMathOperator{\poly}{poly}
\DeclareMathOperator{\toddp}{td}
\DeclareMathOperator{\BUnit}{\mathbf 1}
\DeclareMathOperator{\BZero}{\mathbf 0}
\newcommand*{\intint}[2][1]{\left\{#1,\, \dots,\, #2\right\}}
\DeclareMathOperator{\NP}{N\!P}
\newcommand\restr[2]{{% we make the whole thing an ordinary symbol
  \left.\kern-\nulldelimiterspace % automatically resize the bar with \right
  #1 % the function
  \vphantom{\big|} % pretend it's a little taller at normal size
  \right|_{#2} % this is the delimiter
  }}
\title{Hyperplanes Avoiding Problem and Integer Points Counting in Polyhedra} %TODO Please add
\titlerunning{Hyperplanes Avoiding Problem} %TODO optional, please use if title is longer than one line
\author{Grigorii Dakhno}{Laboratory of Algorithms and Technologies for Network Analysis, HSE University, Russian Federation}{dahnogrigory@yandex.ru}{}{}%TODO mandatory, please use full name; only 1 author per \author macro; first two parameters are mandatory, other parameters can be empty. Please provide at least the name of the affiliation and the country. The full address is optional. Use additional curly braces to indicate the correct name splitting when the last name consists of multiple name parts.
\author{Dmitry Gribanov\footnote{Corresponding author}}{Laboratory of Algorithms and Technologies for Network Analysis, HSE University, Russian Federation \and T-Technologies, Russian Federation \and \url{https://www.hse.ru/en/org/persons/51820259/}}{dimitry.gribanov@gmail.com}{https://orcid.org/0000-0002-4005-9483}{}
\author{Nikita Kasianov}{Laboratory of Algorithms and Technologies for Network Analysis, HSE University, Russian Federation}{kasyanikita@gmail.com}{}{}%TODO mandatory, please use full name; only 1 author per \author macro; first two parameters are mandatory, other parameters can be empty. Please provide at least the name of the affiliation and the country. The full address is optional. Use additional curly braces to indicate the correct name splitting when the last name consists of multiple name parts.
\author{Anastasiia Kats}{Moscow Independent Research Institute of Artificial Intelligence (MIRAI), Russian Federation}{Kats.AM@phystech.edu}{}{}%TODO mandatory, please use full name; only 1 author per \author macro; first two parameters are mandatory, other parameters can be empty. Please provide at least the name of the affiliation and the country. The full address is optional. Use additional curly braces to indicate the correct name splitting when the last name consists of multiple name parts.
\author{Andnrey Kupavskii}{T-Technologies, Russian Federation \and \url{https://kupavskii.com/}}{kupavskii@ya.ru}{https://orcid.org/0000-0002-8313-9598}{}%TODO mandatory, please use full name; only 1 author per \author macro; first two parameters are mandatory, other parameters can be empty. Please provide at least the name of the affiliation and the country. The full address is optional. Use additional curly braces to indicate the correct name splitting when the last name consists of multiple name parts.
\author{Nikita Kuz'min}{Moscow Independent Research Institute of Artificial Intelligence (MIRAI), Russian Federation}{nakuzmin@hse.ru}{}{}
\author{Stanislav Moiseev}{T-Technologies, Russian Federation}{s.moiseev@t-tech.dev}{https://orcid.org/0009-0007-0599-8999}{}
\authorrunning{G. Dakhno, D. Gribanov, N. Kasianov, A. Kats, A. Kupavskii, N. Kuz'min} %TODO mandatory. First: Use abbreviated first/middle names. Second (only in severe cases): Use first author plus 'et al.'
\keywords{Bounded subdeterminants, Bounded minors, Counting integer points, Combinatorial nullstellensatz, Covering by subspaces, Covering by hyperplanes} %TODO mandatory; please add comma-separated list of keywords
\begin{document}

\maketitle

%TODO mandatory: add short abstract of the document
\begin{abstract}
    In our work, we consider the problem of computing a vector $x \in \ZZ^n$ of minimum $\norm{\cdot}_p$-norm such that $a^\top x \not= a_0$, for any vector $(a,a_0)$ from a given finite set $\AC \subseteq \ZZ^n$. In other words, we search for a vector of minimum norm that avoids a given finite set of hyperplanes, which is natural to call as the \emph{Hyperplanes Avoiding Problem}. This problem naturally appears as a subproblem in Barvinok-type algorithms for counting integer points in polyhedra. More precisely, it appears when one needs to evaluate certain rational generating functions in an avoidable critical point. 
    
    We show that:
    \begin{enumerate}
        \item With respect to $\norm{\cdot}_1$, the problem admits a feasible solution $x$ with $\norm{x}_1 \leq (m+n)/2$, where $m = \abs{\AC}$, and show that such solution can be constructed by a deterministic polynomial-time algorithm with $O(n \cdot m)$ operations. Moreover, this inequality is the best possible. This is a significant improvement over the previous randomized algorithm, which computes $x$ with a guaranty $\norm{x}_{1} \leq n \cdot m$. The original approach of A.~Barvinok can guarantee only $\norm{x}_1 = O\bigl((n \cdot m)^n\bigr)$. To prove this result, we use a newly established algorithmic variant of the \emph{Combinatorial Nullstellensatz};
    
        \item The problem is $\NP$-hard with respect to any norm $\norm{\cdot}_p$, for $p \in \bigl(\RR_{\geq 1} \cup \{\infty\}\bigr)$;
    
        \item As an application, we show that the problem to count integer points in a polytope $\PC = \{x \in \RR^n \colon A x \leq b\}$, for given $A \in \ZZ^{m \times n}$ and $b \in \QQ^m$, can be solved by an algorithm with $O\bigl(\nu^2 \cdot n^3 \cdot \Delta^3 \bigr)$ operations, where $\nu$ is the maximum size of a normal fan triangulation of $\PC$, and $\Delta$ is the maximum value of rank-order subdeterminants of $A$. It refines the previous state-of-the-art $O\bigl(\nu^2 \cdot n^4 \cdot \Delta^3\bigr)$-time algorithm. As a further application, it provides a refined complexity bound for the counting problem in polyhedra of bounded codimension. More specifically, it improves the computational complexity bound for counting the number of solutions in the \emph{Unbounded Subset-Sum} problem.
    \end{enumerate}
\end{abstract}

\section{Introduction}

Let $\AC \subseteq \ZZ^{n+1}$ be a set of pairs $(a,a_0)$ with $a \in \ZZ^n\setminus\{\BZero\}$ and $a_0 \in \ZZ$, and denote $m := \abs{\AC} < \infty$. 
% In other words, $\mathcal{A}$ consists of $m$ nonzero integer vectors. 
Consider the system
\begin{align}
    &\begin{cases}
        a^\top \cdot x \not= a_0,\quad \forall (a,a_0) \in \AC\\
        x \in \ZZ^n.
    \end{cases}\label{avoid_sys}\tag{HyperplanesAvoiding}
\end{align}
The system \eqref{avoid_sys} has infinitely many solutions, and it is interesting to find solutions having small norm (we are mainly interested in the $\norm{\cdot}_1$-norm). The latter motivates the following problem, which is natural to call the \emph{Hyperplanes Avoiding Problem}:
\begin{align}
    &\norm{x}_p \to \min \notag\\
    &\begin{cases}
        a^\top \cdot x \not= a_0,\quad \forall (a,a_0) \in \AC\\
        x \in \ZZ^n.
    \end{cases}\tag{$p$-HyperplanesAvoiding}\label{avoid_prob}
\end{align}
In other words, we are just trying to find an integer vector of the smallest norm that does not lie in any of the $m$ given hyperplanes. It is also interesting to consider the \emph{Homogeneous forms} of the system \eqref{avoid_sys} and problem \eqref{avoid_prob}, when $a_0 = 0$ for any $(a,a_0) \in \AC$. In this case, we are trying to find an integer vector of the smallest norm that does not lie in any of the $m$ given $(n-1)$-dimensional subspaces.

\subsection{Motivation: The integer Points Counting in Polyhedra}\label{sec:motivation}

The problem \eqref{avoid_prob} naturally appears as a subproblem in algorithms for integer points counting in polyhedra.
% , which are based on evaluation of rational generating functions in their critical points. 
Let us give a brief sketch of how it appears. 

\subsubsection{Barvinok's Algorithm}\label{sec:counting_barvinok}

Consider a rational polytope $\PC$ defined by a system of linear inequalities. The seminal work of A.~Barvinok \cite{Barv_Original_Conf,Barv_Original} proposes an algorithm to count the number of points inside $\PC \cap \ZZ^n$, which  is polynomial for a fixed dimension (the modifications of Barvinok's algorithm could be found in \cite{OnBarvinoksAlg_Dyer,BarvPom,EffectiveCounting,BarvWoods}, the detailed description of the algorithm and its theoretical basis could be found in \cite{BarvBook,counting_Lasserre_book}). Returning to the algorithm itself, it is based on a representation of $\PC \cap \ZZ^n$ via some rational generating function. More precisely, Barvinok's algorithm computes a set of indices $\IC$, and for each $i \in \IC$, it computes a number $\epsilon^{(i)} \in \ZZ$ and vectors $v^{(i)}, u_1^{(i)}, \dots, u_n^{(i)} \in \ZZ^n$ such that
\begin{equation}\label{Barvinok_genf_sketch}
    \sum\limits_{x \in \PC \cap \ZZ^n} z^x = f_{\PC}(z) := \sum\limits_{i \in \IC} \epsilon^{(i)} \cdot \frac{z^{v^{(i)}}}{\bigl(1 - z^{u_1^{(i)}}\bigr) \cdot \ldots \cdot \bigl(1 - z^{u_n^{(i)}}\bigr)}.
\end{equation}
Here, the notation $z^x$ means $z^x = z_1^{x_1} \cdot \ldots \cdot z_n^{x_n}$. The right-hand-side of \eqref{Barvinok_genf_sketch}, i.e. the function $f_{\PC}(z)$, is called the \emph{short rational generating function of $\PC \cap \ZZ^n$}. Since the left part of \eqref{Barvinok_genf_sketch} is a finite sum, the point $z = \BUnit$ is an avoidable critical point of $f_{\PC}(z)$. Therefore,
\begin{equation}\label{Barvinok_lim_sketch}
    \abs{\PC \cap \ZZ^n} = \lim\limits_{z \to \BUnit} f_{\PC}(z).
\end{equation}
One possible approach to find this limit, is to compute a vector $c \in \ZZ^n$ such that $c^\top u^{(i)}_j \not= 0$, for any $i \in \IC$ and $j \in \intint n$. Note that $c$ is a solution of the system \eqref{avoid_sys} with $\AC = \bigl\{ u^{(i)}_j \bigr\}$, and $m = \abs{\AC} = (n+1) \cdot \abs{\IC}$. Using the substitution $z_i \to e^{\tau \cdot c_i}$, the function $f_{\PC}(z)$ transforms to the function $\hat f_{\PC}(\tau)$, depending on the single complex variable $\tau$, defined by
\begin{equation}\label{exp_gen_fun_sketch}
    \hat f_{\PC}(\tau) = \sum\limits_{i \in \IC} \epsilon^{(i)} \cdot \frac{e^{\langle c, v^{(i)} \rangle \cdot \tau}}{\bigl(1 - e^{  \langle c, u_1^{(i)} \rangle \cdot \tau}\bigr) \cdot \ldots \cdot \bigl(1 - e^{ \langle c, u_n^{(i)} \rangle \cdot \tau}\bigr)}.
\end{equation}
Now, since $\hat f_{\PC}$ is analytical, the limit \eqref{Barvinok_lim_sketch} just equals the $[\tau^0]$-term of the Tailor's series for $\hat f_{\PC}(\tau)$:
\begin{equation*}
    \abs{\PC \cap \ZZ^n} = \lim\limits_{\tau \to 0} \hat f_{\PC}(\tau) = [\tau^0]\hat f_{\PC}.
\end{equation*}

\noindent Denoting $\alpha_i = \langle c, v^{(i)} \rangle$ and $\beta_{i j} = \langle c, u^{(i)}_j \rangle$, it is possible\footnote{See, for example, \cite[Chapter~14]{BarvBook}.} to write down the exact formula for $\abs{\PC \cap \ZZ^n}$: 
\begin{equation}\label{eq:exact_counting_todd}
    \abs{\PC \cap \ZZ^n} = \sum\limits_{i \in \IC} \frac{1}{\beta_{i 1} \cdot \dots \cdot \beta_{i n}} \sum\limits_{j=0}^n \frac{\alpha_i^j}{j!} \cdot \toddp_{n-j}(\beta_{i 1} \cdot \dots \cdot \beta_{i n}),
\end{equation}
where is a homogeneous polynomial of degree $j$, called the \emph{$j$-th Todd polynomial}.

So, from the point of view of practical calculations, it is highly preferable to calculate the vector $c$ satisfying $c^\top u^{(i)}_j \not= 0$ with the smallest possible norm, because it will reduce the size of the numbers $\{\alpha_i,\beta_{i j}\}$, which in turn will reduce computational cost of the Todd polynomials and the general formula \eqref{eq:exact_counting_todd}. 

We hope that our result can significantly accelerate the evaluation part of the Barvinok-type algorithms. Theoretically, it reduces the size of rational numbers during the evaluation phase from $n \log (\nu n)$ to $\log \nu$ in comparison to Barvinok's original approach. From a practical point of view, we propose some experimental results showing that the new algorithm constructs solutions of significantly lower norm than random sampling in a cross-polytope, see \Cref{sec:experiments}. 

\begin{remark}
    We note that this paper is not considering the \emph{dual-type algorithms} for counting integer points in polyhedra. A great survey of this approach could be found in the book \cite{counting_Lasserre_book} of J.~Lasserre.
\end{remark}

\subsubsection{Different Parameterization for Counting Algorithms}\label{sec:counting_alternative}

Assuming that the polyhedron $\PC$ is defined by a system $A x \leq b$, for given $A \in \ZZ^{m \times n}$ and $b \in \QQ^m$, the computational complexity of Barvinok's algorithm in terms of arithmetic operations number can be bounded by
\begin{equation*}\label{Barvinoks_complexity}
    \nu \cdot O(\log \Delta)^{n \ln n},
\end{equation*}
where $\nu$ is the maximum size of a normal fan triangulation of $\PC$, and $\Delta$ is the maximum value of the rank-order subdeterminants of $A$.

% Thus, finding a good solution to \eqref{avoid_sys} has only effect on the variable bit-encoding sizes of the Barvinok's algorithm. 
However, there is an  alternative algorithmic approach to integer point counting, which allows obtaining complexity bounds of the type $\poly(\nu,n,\Delta)$. It was developed in a series of works \cite{CountingFixedM,Counting_FPT_Delta,Counting_FPT_Delta_corrected,SparseILP_Gribanov,Parametric_Counting_Grib}.\footnote{For the latest perspective see \cite{SparseILP_Gribanov}, for the parametric case see \cite{Parametric_Counting_Grib}, the paper \cite{Counting_FPT_Delta_corrected} is a correction of \cite{Counting_FPT_Delta}.} %and serves as a motivation for the current paper. 
In this alternative approach, %The main difference of the alternative approach with respect to the Barvinok's algorithm is that 
the norm of the solution to \eqref{avoid_sys} is a multiplicative factor in the bound on its computational complexity. More precisely, the following result was obtained in \cite{SparseILP_Gribanov}. 
\begin{theorem}[D.~Gribanov, I.~Shumilov, D.~Malyshev \& N.~Zolotykh \cite{SparseILP_Gribanov}]\label{delta_counting_complexity_prop}
    Assume that, for any collection $\AC$ of vectors of size $m$, there exists a solution $x$ of the system \eqref{avoid_sys} with $\norm{x}_1 \leq L(m,n)$. Assume additionally that such $x$ can be calculated for free. Then the number $\abs{\PC \cap \ZZ^n}$ can be calculated with 
    \begin{equation*}
        O\left( \nu \cdot L(\nu \cdot n, n) \cdot n^2 \cdot \Delta^3 \right) \quad\text{operations}.
    \end{equation*}
\end{theorem}
It was shown in \cite{SparseILP_Gribanov} that $L(m,n) \leq n \cdot m$, and such $x$ can be constructed by a randomized polynomial-time algorithm with $O(n \cdot m)$ operations. It means that the counting complexity can be roughly estimated by $O\bigl(\nu^2 \cdot n^4 \cdot \Delta^3 \bigr)$. In the current paper, we show that 
\begin{equation*}
    L(m,n) \leq (m+n)/2,
\end{equation*}
and such $x$ can be constructed by a deterministic $O(n \cdot m)$-time algorithm. The latter yields the counting complexity 
\begin{equation}\label{eq:improved_counting_cb}
O\bigl( \nu^2 \cdot n^3 \cdot \Delta^3 \bigr) \quad\text{operations},
\end{equation}
which is the main theoretical application of our results. 

\subsubsection{Possible Applications of the Alternative Counting Approach}\label{sec:counting_apps}

The parameter $\nu$ from the formula \eqref{eq:improved_counting_cb} can be estimated in different ways, leading to various partial cases and applications of the new approach. We present some of them and highlight the cases where the results of this paper yield improvements.
\begin{enumerate}
    \item {\bf Counting solutions of ILP problems of bounded codimension.} Assume that the polyhedron is defined by the system $A x = b,\, x \in \RR_{\geq 0}^n$, where $A \in \ZZ^{k \times n}$, $b \in \QQ^k$, and $\rank A = k$. It is natural to refer to the parameter $k$ as the \emph{codimension} of $\PC$. Similarly, for polyhedra defined by the systems $A x \leq b$ with $A \in \ZZ^{m \times n}$, we can assume that $k = m - n$. Clearly, $\nu \leq \binom{n}{k}$, which yields a counting algorithm with a computational complexity bound of  
    \begin{equation*}
        O\left(\frac{n}{k}\right)^{2 k} \cdot n^3 \cdot \Delta^3 \quad\text{operations},
    \end{equation*}
    which improves the previously known bound from \cite{SparseILP_Gribanov} by a factor of $n$.

    \item {\bf Counting solutions of the unbounded Subset-Sum problem.} From the previous result, it follows that counting the solutions of the unbounded Subset-Sum problem  
    \begin{equation*}
        \begin{cases}
            w_1 x_1 + \dots + w_n x_n = w_0,\\
            x \in \ZZ_{\geq 0}^n,
        \end{cases}
    \end{equation*}
    can be performed in $O(n^5 \cdot w_{\max}^3)$ operations. Here, $w_i$ are integers, and $w_{\max} = \max_{i \in \intint n} w_i$. Note that the complexity bound does not depend on $w_0$. Once again, this improves the previously known computational complexity bound by a factor of $n$.

    \item {\bf ILP parameterized by the lattice determinant.} Denote $\Delta_{\text{Lat}} = \sqrt{\det A^\top A}$. In the case where the polyhedron is defined by the system in standard form $A x = b,\, x \geq 0$, we set $\Delta_{\text{Lat}} = \sqrt{\det A A^\top}$. This parameter has received considerable attention in recent works. For instance, Aliev, Celaya et al. \cite{SparsityAndGapTransf} showed that the integrality gap is bounded by $\Delta_{\text{Lat}}$. In Aliev \& Henk \cite{DiagonalFrobenius}, it was demonstrated that the Frobenius diagonal number can also be estimated using $\Delta_{\text{Lat}}$. In Aliev, De~Loera et al. \cite{SupportIPSolutions}, the logarithm of $\Delta_{\text{Lat}}$ is used to bound the sparsity of an optimal ILP solution.  
    
    In turn, formula \eqref{eq:improved_counting_cb} directly yields a counting algorithm with the computational complexity bound  
    \begin{equation*}
        n^3 \cdot \Delta^{O(1)}_{\text{Lat}} \quad\text{operations},
    \end{equation*}
    which follows from the trivial inequalities $\Delta \leq \Delta_{\text{Lat}}$ and $\nu \leq \Delta^2_{\text{Lat}}$. Obtaining a more precise polynomial dependence of computational complexity on $\Delta_{\text{Lat}}$ requires a separate study.
    
    \item {\bf Sparse ILP.} Assume that the $\ell_1$-norm of the rows or columns\footnote{It is sufficient to bound the $\ell_1$-norm for the columns of $n \times n$ nondegenerate submatrices of $A$.} of $A$ is bounded by $\gamma$. According to \cite{SparseILP_Gribanov}, $\nu = \gamma^{O(n)}$, which implies that the counting problem can be solved in $\gamma^{O(n)}$ operations. For instance, when the matrix $A$ has bounded entries and a bounded number of nonzeros per row or column, counting can be performed in $2^{O(n)}$ operations. In the case where only the elements of the matrix are bounded, the complexity of the counting problem becomes $n^{O(n)}$. See \cite{SparseILP_Gribanov} for further details.    
\end{enumerate}

\subsection{Main Results and Related Work}

Let us summarize our results below.
\begin{enumerate}
    \item With respect to $\norm{\cdot}_1$, the problem \eqref{avoid_prob} admits a feasible solution $x$ with $\norm{x}_1 \leq (m+n)/2$, where $m = \abs{\AC}$, and we show that such solution can be constructed by a deterministic polynomial-time algorithm with $O(n \cdot m)$ operations, see \Cref{approx_alg_th} of \Cref{sec:approx}. The inequality is the best possible, see the discussion afterward.
    
    This is a significant improvement over the previous $O(n \cdot m)$-time randomized algorithm of \cite{SparseILP_Gribanov}, which computes $x$ with a guaranty $\norm{x}_{1} \leq n \cdot m$. In contrast, the original approach of A.~Barvinok searches $x$ in the form $x = (1 , t, t^2, \dots, t^{n-1})$. Since, for each $a \in \AC$, $a^\top x = \sum_{i = 1}^{n} a_i \cdot t^{i-1}$ is a polynomial of degree at most $n-1$, there exists a suitable $t$ with $t \leq n \cdot m$. However, this reasoning can guaranty only $\norm{x}_1 = O\bigl((n \cdot m)^n\bigr)$. 

    \item For any $p \in \bigl(\RR_{\geq 1} \cup \{\infty\}\bigr)$, the problem \eqref{avoid_prob} is $\NP$-hard with respect to any norm $\norm{\cdot}_p$, even in its \emph{homogeneous form}. See \Cref{avoid_NPhard_th} of Section \Cref{sec:exact};

    \item We show that the problem to calculate the value $\abs{\PC \cap \ZZ^n}$ for a polyhedron $\PC$ defined by the system $A x \leq b$, for a given $A \in \ZZ^{m \times n}$ and $b \in \QQ^m$, can be solved with $O\bigl( \nu^2 \cdot n^3 \cdot \Delta^3 \bigr)$ operations, where $\nu$ is the maximum size of a normal fan triangulation of $\PC$, and $\Delta$ is the maximum value of rank-order subdeterminants of $A$. It refines the $O\bigl(\nu^2 \cdot n^4 \cdot \Delta^3\bigr)$-time algorithm of \cite{SparseILP_Gribanov}. As an application, it provides a refined complexity bound for the counting problem in polyhedra of bounded codimension. More specifically, it improves the computational complexity bound for counting the number of solutions in the Unbounded Subset-Sum problem. See \Cref{sec:counting_alternative}, more specifically, see the discussion alongside \Cref{delta_counting_complexity_prop};

    \item From a practical point of view, our results reduces the lengths of numbers during a certain phase of Barvinok-type methods. We hope that this should lead to speedups in practice. A discussion of this question is provided in \Cref{sec:counting_barvinok}.
\end{enumerate}

It is easy to see that the guaranty $\norm{x}_1 \leq (m+n)/2$ on an existing solution $x$ of the system \eqref{avoid_sys} is the best possible.
\begin{proposition}\label{approx_alg_optimality_prop}
    There exists a family of systems \eqref{avoid_sys} such that $\norm{x}_1 \geq (m+n)/2$ for any solution $x$. 
\end{proposition}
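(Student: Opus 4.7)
The plan is to exhibit an explicit parametric family of systems \eqref{avoid_sys} and then verify the lower bound by direct inspection. Fix integers $n \geq 1$ and $t \geq 0$, and let $\AC \subseteq \ZZ^{n+1}$ consist of the $n$ pairs $(e_i, 0)$ for $i \in \intint{n}$ together with the $2t$ pairs $(e_1, j)$ for $j \in \{\pm 1, \pm 2, \dots, \pm t\}$. These $n + 2t$ pairs are pairwise distinct and each has a nonzero linear part, so the resulting system is valid and $m = \abs{\AC} = n + 2t$.

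Next I would analyze an arbitrary integer solution $x$. Avoiding the hyperplanes of the first family forces $x_i \neq 0$, hence $\abs{x_i} \geq 1$, for every $i \in \intint{n}$. Combining this with avoidance of the second family yields $x_1 \notin \{0, \pm 1, \dots, \pm t\}$, and therefore $\abs{x_1} \geq t+1$. Summing the coordinatewise lower bounds gives
\[
    \norm{x}_1 \;\geq\; (t+1) + (n-1) \;=\; n + t \;=\; \frac{m+n}{2},
\]
as claimed. The vector $x = (t+1,\, 1,\, 1,\, \dots,\, 1)$ satisfies every constraint of the system and attains this value with equality, so the bound is sharp for each member of the family.

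There is no substantial obstacle here; the only conceptual point is to split the $m$ hyperplanes into two complementary roles. The $n$ coordinate hyperplanes each force one coordinate to be nonzero and so contribute one unit to $\norm{x}_1$, while each additional pair of ``forbidden value'' hyperplanes on the first coordinate increases the minimum of $\abs{x_1}$ by one and hence contributes a further unit. Balancing these two contributions produces precisely the ratio $(m+n)/2$ and shows that the upper bound from Theorem \ref{approx_alg_th} cannot be improved.
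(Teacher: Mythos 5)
Your proof is correct and uses essentially the same idea as the paper: both constructions consist entirely of coordinate hyperplanes $x_i = j$, so that each forbidden value forces $\abs{x_i}$ up by one, and the tightness computation is identical. The paper distributes the forbidden set $\intint[-k]{k}$ uniformly over all $n$ coordinates (so $m = (2k+1)n$), whereas you forbid only $0$ on every coordinate and pile the extra $2t$ constraints onto $x_1$ (so $m = n + 2t$); this is a cosmetic difference, though your parametrization does realize a denser set of pairs $(m,n)$.
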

\begin{proof}
    Fix some positive integer $k$. The desired system consists of the constraints $x_i \not= j$, for any $i \in \intint n$ and $j \in \intint[-k]{k}$. So, the total number of constraints is $m = (2k + 1) \cdot n$. It is easy to see that $\abs{x_i} \geq k + 1$, for any $i \in \intint n$ and any solution $x$ of the system. Therefore, $\norm{x}_1 \geq (k+1) \cdot n = (m + n)/2$.
\end{proof}

However, for the homogeneous form of the system \eqref{avoid_sys},
% If one does not require the existence of an efficient algorithm, 
the asymptotics of the solution quality with respect to the parameter $m$ can be slightly improved. This observation is based on the following result of I.~B{\'a}r{\'a}ny, G.~Harcos, J.~Pach, \& G.~Tardos \cite{SubspaceCovering_Barany}. Let $\BB_1$ be the unit ball with respect to $\norm{x}_1$ and $g(r)$ be a minimal number of subspaces needed to cover all points of the set $r \cdot \BB_1 \cap \ZZ^n$. 
\begin{theorem}[I.~B{\'a}r{\'a}ny, G.~Harcos, J.~Pach, \& G.~Tardos \cite{SubspaceCovering_Barany}]\label{avoid_subspace_cover_th}
    There exist absolute constants $C_1$ and $C_2$ such that
    \begin{equation*}
        C_1 \cdot \frac{1}{n^2} \cdot r^{\frac{n}{n-1}} \leq g(r) \leq C_2 \cdot 2^n \cdot r^{\frac{n}{n-1}}.
    \end{equation*}
\end{theorem}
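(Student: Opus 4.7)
For the upper bound, I would first reduce to covering only the primitive points in $r \BB_1 \cap \ZZ^n$, since every other integer point lies on the same line through the origin as a primitive one. For primitive $x \in r \BB_1 \cap \ZZ^n$, consider the rank-$(n-1)$ sublattice $L_x := \{c \in \ZZ^n \colon c^\top x = 0\}$; it has covolume $\norm{x}_2$ in $x^\perp$. Combining Minkowski's first theorem with Vaaler's cube-slicing inequality $\vol_{n-1}(x^\perp \cap [-N,N]^n) \geq (2N)^{n-1}$ produces a nonzero (and, after dividing by the $\gcd$, primitive) $c \in L_x$ with $\norm{c}_\infty \leq \norm{x}_2^{1/(n-1)} + 1 \leq C \cdot r^{1/(n-1)}$. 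Hence the family of hyperplanes $\{H_c \colon c \in \ZZ^n \text{ primitive}, \norm{c}_\infty \leq C \cdot r^{1/(n-1)}\}$ covers $r\BB_1 \cap \ZZ^n \setminus \{\BZero\}$, and its cardinality is $O(1)^n \cdot r^{n/(n-1)}$, absorbing the base into a constant gives $C_2 \cdot 2^n \cdot r^{n/(n-1)}$.

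\textbf{Lower bound.} For the lower bound, I would first establish that for primitive $c$ the lattice $\ZZ^n \cap H_c$ has covolume $\norm{c}_2$ in $H_c$, so by a standard lattice-points-in-convex-body estimate $\abs{H_c \cap r \BB_1 \cap \ZZ^n} = O(r^{n-1}/\norm{c}_2)$ with an implicit constant polynomial in $n$. Any covering $H_{c_1}, \dots, H_{c_g}$ (WLOG $c_i$ primitive) must account for all $\Theta(r^n)$ integer points in $r \BB_1$, so $\sum_{i=1}^g 1/\norm{c_i}_2 \gtrsim r$. Since the number of primitives with $\ell_2$-norm at most $N$ is $O(N^n)$, sorting primitives by norm yields $\norm{c_{(k)}}_2 \gtrsim k^{1/n}$, whence $\sum_{i=1}^g 1/\norm{c_i}_2 \leq O(g^{(n-1)/n})$ for any choice of $g$ primitives. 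Combining, $g^{(n-1)/n} \gtrsim r$, i.e.\ $g \gtrsim r^{n/(n-1)}$; a careful track of the polynomial-in-$n$ constants through the comparisons between $\ell_1$-, $\ell_2$-, and $\ell_\infty$-norms produces the stated $1/n^2$ factor.

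\textbf{Main obstacle.} The principal difficulty is uniform control of the constants that depend on $n$ and on the direction of the vector $c$ (or $x$). On the upper-bound side, Vaaler's inequality supplies the lower bound on the cube-section volume uniformly in the direction of $x$, but this uniformity is exactly what fails for alternative auxiliary norms on the hyperplane when $x$ nearly aligns with a coordinate axis, which is why $\ell_\infty$ is the right choice. On the lower-bound side, the rigorous justification that the extremal covering consists of the smallest-norm primitives requires a rearrangement-type argument, and the $O(r^{n-2}/\norm{c}_2)$ boundary error terms in the lattice-point count on $H_c$ must be checked to remain lower order uniformly in $c$. These bookkeeping issues---rather than the high-level strategy---carry the technical burden of the result.
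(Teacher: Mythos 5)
The paper offers no proof of this statement: it is imported from B\'ar\'any, Harcos, Pach and Tardos, with only the remark that the $\BB_1$ case is a ``straightforward adaptation of the original proof,'' so there is no in-paper argument to compare against; I therefore assess your sketch on its own terms. Your upper bound is sound and is essentially the standard route: after reducing to primitive $x$, the identity $\det\bigl(\ZZ^n \cap x^\perp\bigr) = \norm{x}_2 \leq r$ together with Minkowski's first theorem applied to the symmetric convex body $x^\perp \cap [-N,N]^n$, whose $(n-1)$-volume Vaaler's inequality bounds below by $(2N)^{n-1}$, produces a normal vector $c$ with $\norm{c}_\infty \leq N \approx r^{1/(n-1)}$, and the family of all such hyperplanes has size roughly $(2N+1)^n = O\bigl(2^n r^{n/(n-1)}\bigr)$ once $r^{1/(n-1)}$ dominates the rounding.

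The lower bound, however, has a genuine gap. The estimate $\abs{H_c \cap r\BB_1 \cap \ZZ^n} = O\bigl(r^{n-1}/\norm{c}_2\bigr)$ is false when $\norm{c}_2$ is large relative to $r$: taking $c = (1,-M,0,\dots,0)$ with $M$ arbitrarily large, the hyperplane $H_c$ contains the coordinate subspace $\{x_1 = x_2 = 0\}$ and hence at least $\Omega\bigl(r^{n-2}\bigr)$ points of $r\BB_1 \cap \ZZ^n$ (constant depending on $n$), no matter how large $\norm{c}_2$ is. Your stated correction term $O\bigl(r^{n-2}/\norm{c}_2\bigr)$ has the wrong shape for the same reason --- the error cannot decay in $\norm{c}_2$; the correct form is an additive $O_n\bigl(r^{n-2}\bigr)$, arising when the lattice points of the section fail to affinely span $H_c$ or when some successive minimum of $H_c \cap r\BB_1$ with respect to $\ZZ^n \cap H_c$ exceeds $1$. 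This is repairable by contradiction --- if $g \leq r^{n/(n-1)}$ then $g \cdot O_n\bigl(r^{n-2}\bigr)$ is negligible against $\Theta_n(r^n)$ for $n \geq 3$, with $n=2$ needing a separate one-line check --- but your argument as written does not close that loop. Finally, the claim that ``careful tracking of polynomial-in-$n$ constants'' yields the stated $1/n^2$ factor is not substantiated: the generic tools you invoke lose factors exponential in $n$ (for instance, the roughly $(2\pi e)^{n/2}$ primitive vectors of $\ell_2$-norm $O(\sqrt n)$ each contribute a full unit to $\sum_i \norm{c_i}_2^{-1}$, and crude successive-minima point counts carry $3^{n}$-type constants), so recovering a polynomial-in-$n$ constant is precisely where the real work of the cited proof lies, not a bookkeeping afterthought.
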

Note that the original work \cite{SubspaceCovering_Barany} contains a  more general result concerning arbitrary convex bodies in $\RR^n$, albeit  with a  worse dependence on $n$. The \Cref{avoid_subspace_cover_th} is a straightforward adaptation of the original proof to the case of $\BB_1$.

As a corollary, it follows that the system \eqref{avoid_sys} always has a solution with an   asymptotics that is slightly better in $m$, but worse in $n$.  
\begin{corollary}\label{avoid_exist_th}
    The system \eqref{avoid_sys} has a solution $x$, such that
    $$
        \norm{x}_1 = O \bigl( n^2 \cdot m^{\frac{n-1}{n}} \bigr).
    $$
\end{corollary}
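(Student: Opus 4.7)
The plan is to recast Corollary~\ref{avoid_exist_th} as a statement about the impossibility of covering $r\cdot\BB_1 \cap \ZZ^n$ by $m$ hyperplanes, and then derive the desired bound on $r$ from Theorem~\ref{avoid_subspace_cover_th}. Concretely, a solution $x \in \ZZ^n$ of \eqref{avoid_sys} with $\norm{x}_1 \leq r$ exists if and only if the $m$ hyperplanes $H_i = \{x \in \RR^n : a_i^\top x = a_{i,0}\}$ (for $(a_i, a_{i,0}) \in \AC$) do not cover $r\cdot\BB_1 \cap \ZZ^n$. So it suffices to pick $r = C \cdot n^2 \cdot m^{(n-1)/n}$ for a sufficiently large constant $C$ and rule out such a covering.

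In the homogeneous case $a_{i,0} = 0$ for every $i$, each $H_i$ is a linear $(n-1)$-dimensional subspace, and Theorem~\ref{avoid_subspace_cover_th} applies directly: a covering by $m$ subspaces exists only when $m \geq g(r) \geq C_1 \cdot r^{n/(n-1)} / n^2$. Solving $m \geq C_1 r^{n/(n-1)}/n^2$ for $r$ yields $r \leq (mn^2/C_1)^{(n-1)/n} = O\bigl(n^{2(n-1)/n} \cdot m^{(n-1)/n}\bigr) = O\bigl(n^2 \cdot m^{(n-1)/n}\bigr)$, exactly as claimed. Taking the contrapositive, the chosen $r$ guarantees the existence of a lattice point in $r\cdot\BB_1$ avoiding all $H_i$.

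For the general (affine) case, my plan is to reduce to the linear one. For each $H_i$ that meets $\ZZ^n$, pick a lattice witness $v_i \in H_i \cap \ZZ^n$ (hyperplanes with $H_i \cap \ZZ^n = \emptyset$ can be discarded). Then $H_i - v_i = L_i$, the linear hyperplane parallel to $H_i$, and the triangle inequality gives
\[
  |H_i \cap r\cdot\BB_1 \cap \ZZ^n| \;\leq\; |L_i \cap 2r\cdot\BB_1 \cap \ZZ^n|.
\]
Summing over $i$, a covering $\bigcup_i H_i \supseteq r\cdot\BB_1 \cap \ZZ^n$ forces the $m$ linear hyperplanes $L_i$ to account for at least $|r\cdot\BB_1 \cap \ZZ^n|$ lattice points inside $2r\cdot\BB_1$. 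Applying Theorem~\ref{avoid_subspace_cover_th} to the $L_i$'s with radius $2r$ yields the same asymptotic relation $m = \Omega\bigl(r^{n/(n-1)}/n^2\bigr)$ (up to the absolute constants absorbed into $C$), hence $r = O(n^2 \cdot m^{(n-1)/n})$.

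The main obstacle is justifying this last step rigorously: Theorem~\ref{avoid_subspace_cover_th} lower-bounds the number of subspaces needed to cover \emph{all} of $r\cdot\BB_1 \cap \ZZ^n$, not merely a large subset of $2r\cdot\BB_1 \cap \ZZ^n$. Circumventing this requires either an adaptation of the Bárány--Harcos--Pach--Tardos argument directly to affine hyperplanes (its core is an incidence/density count that is translation-invariant, so it should extend at the same exponent), or a more delicate averaging-over-shifts argument. A straightforward homogenization lifting to $\RR^{n+1}$ produces only the weaker exponent $m^{n/(n+1)}$, so recovering $m^{(n-1)/n}$ is precisely where the care is needed.
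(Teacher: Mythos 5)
Your first paragraph and the homogeneous-case argument are exactly the intended deduction: pick $r$ just large enough that $C_1 r^{n/(n-1)}/n^2 > m$, so the $m$ subspaces cannot cover $r\cdot\BB_1\cap\ZZ^n$ by the lower bound of Theorem~\ref{avoid_subspace_cover_th}, and $r = O\bigl(n^{2(n-1)/n} m^{(n-1)/n}\bigr) = O\bigl(n^2 m^{(n-1)/n}\bigr)$. The paper offers no more than this one-line reduction, so on that part you match it.

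The affine extension you then attempt, and correctly flag as the delicate point, should simply be dropped: read in context, the corollary is asserted only for the \emph{homogeneous} form of \eqref{avoid_sys} (the paragraph introducing Theorem~\ref{avoid_subspace_cover_th} opens with ``for the homogeneous form of the system \eqref{avoid_sys} \dots''), even though the corollary's wording fails to repeat the restriction. More importantly, no amount of care with the translation argument can rescue the affine case, because the statement is false there. The paper's own Proposition~\ref{approx_alg_optimality_prop} exhibits affine systems (constraints $x_i \neq j$ for $i \in \intint{n}$, $j \in \intint[-k]{k}$, so $m = (2k+1)n$) in which every solution satisfies $\norm{x}_1 \geq (k+1)n = \Theta_n(m)$; for fixed $n \geq 2$ and $k \to \infty$ this grows like $m$, which eventually exceeds $C\, n^2 m^{(n-1)/n}$ for any absolute constant $C$. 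So the obstruction you identify (the covering lower bound applies to subspaces through the origin, not to translated hyperplanes) is not a technical gap to be circumvented but a reflection of a genuine difference between the two settings: parallel translates of a single linear hyperplane can tile all of $\ZZ^n$ with only $\Theta(r)$ affine hyperplanes meeting $r\cdot\BB_1$, which is exactly what the counterexample exploits. Restrict your writeup to the homogeneous case and it is complete.
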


At the same time, the theorem implies that solutions of significantly smaller norm do not exist in general. In particular, it implies that our constructive bound $\norm{x}_1 \leq (m+n)/2$ is almost optimal with respect to $m$ even in the homogeneous case.
\begin{corollary}\label{avoid_nonexist_th}
    There exists a system \eqref{avoid_sys} such that, for any solution $x$, 
    $$
        \norm{x}_1 = \Omega\bigl( \frac{1}{2^n} \cdot m^{\frac{n-1}{n}} \bigr).
    $$
\end{corollary}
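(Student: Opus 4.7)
The plan is to apply the upper bound of Theorem \ref{avoid_subspace_cover_th} directly, in a ``dual'' fashion: rather than using it to estimate the covering number in terms of the ball radius, one regards the covering family itself as an instance of \eqref{avoid_sys} and reads off a lower bound on the norm of every solution.

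Concretely, set $r := \lfloor (m/(C_2 \cdot 2^n))^{(n-1)/n} \rfloor$, so that $C_2 \cdot 2^n \cdot r^{n/(n-1)} \leq m$. By the upper bound in Theorem \ref{avoid_subspace_cover_th}, the set $r \cdot \BB_1 \cap \ZZ^n$ admits a covering by at most $g(r) \leq m$ proper subspaces $V_1, \dots, V_m$. Each $V_i$ is a rational subspace (being generated by lattice points), hence is contained in a hyperplane $\{x \in \RR^n : a_i^\top x = 0\}$ with some $a_i \in \ZZ^n \setminus \{\BZero\}$. Taking $\AC := \{(a_1,0), \dots, (a_m,0)\}$ defines a homogeneous instance of \eqref{avoid_sys} of size $m$, and any solution $x \in \ZZ^n$ must lie outside $V_1 \cup \dots \cup V_m \supseteq r \cdot \BB_1 \cap \ZZ^n$; in particular, $\norm{x}_1 > r$.

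It remains to unpack $r$ in terms of $m$ and $n$. Using $(2^n)^{(n-1)/n} = 2^{n-1}$ and that $C_2^{(n-1)/n}$ is bounded uniformly in $n$ by $\max(C_2,1)$, one obtains
\begin{equation*}
    r \;\geq\; \tfrac{1}{2} \cdot \left( \frac{m}{C_2 \cdot 2^n} \right)^{\frac{n-1}{n}} \;=\; \Omega\!\left( \frac{1}{2^n} \cdot m^{\frac{n-1}{n}} \right),
\end{equation*}
which together with $\norm{x}_1 > r$ yields the required lower bound.

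No substantial obstacle is anticipated; the argument is essentially a bookkeeping of constants from Theorem \ref{avoid_subspace_cover_th}. The only point worth double-checking is the reduction from an arbitrary subspace cover to a hyperplane cover with \emph{integer} normal vectors, but this is automatic: the subspaces arising in the covering of $\ZZ^n$-points are rational, and their orthogonal complements therefore contain integer vectors which can be chosen as the $a_i$.
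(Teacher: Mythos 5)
Your argument is correct and is exactly the derivation the paper intends: the corollary is stated as an immediate consequence of the upper bound in Theorem \ref{avoid_subspace_cover_th}, obtained by turning a subspace cover of $r \cdot \BB_1 \cap \ZZ^n$ into a homogeneous instance of \eqref{avoid_sys} whose every solution must leave the ball. Your handling of the rationality of the covering subspaces and the bookkeeping of constants (including $(2^n)^{(n-1)/n} = 2^{n-1}$) is sound.
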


\section{Approximate Solution via \emph{Combinatorial Nullstellensatz}}\label{sec:approx}

The existence of a solution $x$ of \eqref{avoid_sys} of a small norm can be guarantied by the  \emph{Combinatorial Nullstellensatz} due to N.~Alon \cite{AlonNull}. 
\begin{theorem}[N.~Alon \cite{AlonNull}]\label{null_Alon_th}
    Let $f(x) \in \FF[x_1, \dots, x_n]$ be a non-zero polynomial with coefficients in the field $\FF$, and let $x_1^{d_1}  \ldots x_n^{d_n}$ be a monomial of $f$ of the largest total degree.  For $i \in \{1,2,\dots,n\}$, let $S_i \subseteq \FF$ be a set  with $\abs{S} \geq d_i + 1$. Then, there exists $y \in S_1 \times \ldots \times S_n$ such that $f(y) \not= 0$. 
\end{theorem}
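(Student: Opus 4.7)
The plan is to follow the classical two-step strategy of N.~Alon. First, I would establish an auxiliary lemma: if $g \in \FF[x_1, \ldots, x_n]$ has $\deg_{x_i}(g) \leq t_i$ for every $i$ and vanishes identically on $T_1 \times \ldots \times T_n$ with $\abs{T_i} = t_i + 1$, then $g \equiv 0$. The proof goes by induction on $n$. The base case $n = 1$ is the familiar fact that a univariate polynomial of degree at most $t_1$ has at most $t_1$ roots. For the inductive step, expand $g = \sum_{k=0}^{t_n} g_k(x_1,\ldots,x_{n-1}) \cdot x_n^k$; fixing any $(s_1,\ldots,s_{n-1}) \in T_1 \times \ldots \times T_{n-1}$, the univariate polynomial $g(s_1,\ldots,s_{n-1}, x_n)$ has degree at most $t_n$ and vanishes on all of $T_n$, so each of its coefficients is zero. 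Hence each $g_k$ vanishes on $T_1 \times \ldots \times T_{n-1}$, and the inductive hypothesis yields $g_k \equiv 0$.

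For the main theorem, I would argue by contradiction: assume $f(y) = 0$ for every $y \in S_1 \times \ldots \times S_n$, and shrink each $S_i$, if necessary, so that $\abs{S_i} = d_i + 1$. The central idea is to rewrite $f$ modulo the vanishing polynomials $h_i(x_i) := \prod_{s \in S_i}(x_i - s)$. Each $h_i$ is monic of degree $d_i + 1$ in $x_i$, so $x_i^{d_i+1}$ can be iteratively replaced by a polynomial in $x_i$ alone of strictly smaller $x_i$-degree, with no other variable entering. Performing this reduction for every variable yields a polynomial $\tilde f$ with $\deg_{x_i}(\tilde f) \leq d_i$ for all $i$. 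Since each $h_i$ vanishes on $S_i$, the values of $\tilde f$ and $f$ coincide on $S_1 \times \ldots \times S_n$, so $\tilde f$ vanishes there as well.

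The final step is to verify that the coefficient of the top monomial $x_1^{d_1} \cdots x_n^{d_n}$ is preserved by the reduction. Since $d_1 + \ldots + d_n$ is the maximal total degree of $f$, any monomial $x_1^{k_1} \cdots x_n^{k_n}$ of $f$ with some $k_i > d_i$ must satisfy $k_j < d_j$ for some other $j$. The reduction strictly lowers one exponent at a time and leaves the remaining exponents untouched, so no monomial descended from such a term can ever attain the profile $(d_1,\ldots,d_n)$. Therefore the coefficient of $x_1^{d_1} \cdots x_n^{d_n}$ in $\tilde f$ agrees with that in $f$ and is nonzero, contradicting the auxiliary lemma applied to $\tilde f$ with $t_i = d_i$ and $T_i = S_i$. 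I expect the main technical obstacle to be precisely this bookkeeping step: one must track how each modular substitution splits a monomial and check that the top profile is unreachable from any strictly lower-degree term. Once this invariance is in hand, the rest reduces to the one-variable root count and a clean induction.
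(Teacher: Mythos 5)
The paper does not prove this statement: it is quoted verbatim (as Theorem~\ref{null_Alon_th}) from Alon's paper on the Combinatorial Nullstellensatz and used as a black box, so there is no in-paper proof to compare against. Your argument is a correct rendition of Alon's original two-step proof: the grid-vanishing lemma for polynomials of bounded individual degrees, followed by reduction of $f$ modulo the monic vanishing polynomials $h_i(x_i) = \prod_{s \in S_i}(x_i - s)$; and your bookkeeping for why the coefficient of $x_1^{d_1} \cdots x_n^{d_n}$ survives the reduction (any monomial with some $k_i > d_i$ must have $k_j < d_j$ elsewhere, and the reduction never increases any exponent) is exactly the point that makes the maximality of the total degree essential, and it is handled correctly.
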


\begin{proposition}\label{small_norm_via_null_prop}
    The system \eqref{avoid_sys} has a solution $y$ such that $\norm{y}_1 \leq (m+n)/2$.
\end{proposition}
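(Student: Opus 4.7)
The plan is to apply Combinatorial Nullstellensatz (Theorem \ref{null_Alon_th}) directly to the product polynomial
\[
    f(x) := \prod_{(a,a_0) \in \AC} \bigl( a^\top x - a_0 \bigr) \in \ZZ[x_1,\dots,x_n],
\]
whose integer zeros are precisely the vectors that violate \eqref{avoid_sys}. Our target is to find $y \in \ZZ^n$ with $f(y) \neq 0$ and $\norm{y}_1 \leq (m+n)/2$.

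First, I would verify that $f$ has total degree exactly $m$. Each linear factor $a^\top x - a_0$ has top-degree part $a^\top x$, a nonzero linear form because $a \neq \BZero$. Since $\QQ[x_1,\dots,x_n]$ is an integral domain, the product $\prod_{(a,a_0) \in \AC} (a^\top x)$ is a nonzero homogeneous polynomial of degree $m$ that coincides with the top-degree part of $f$. Expanding this homogeneous polynomial gives at least one monomial $x_1^{d_1}\cdots x_n^{d_n}$ with $d_1+\dots+d_n = m$ and nonzero coefficient in $f$.

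Next, I would select the sets $S_i$ to be the tightest symmetric integer intervals around $0$ of size $d_i+1$, namely
\[
    S_i := \Bigl\{ -\lfloor d_i/2 \rfloor,\ -\lfloor d_i/2 \rfloor + 1,\ \dots,\ \lceil d_i/2 \rceil \Bigr\} \subseteq \ZZ,
\]
so that $\abs{S_i} = d_i + 1$ and $\max_{s \in S_i} \abs{s} = \lceil d_i / 2 \rceil$. Applying Theorem \ref{null_Alon_th} over $\QQ$ to $f$ with these sets yields an integer point $y \in S_1 \times \dots \times S_n$ with $f(y) \neq 0$, i.e.\ a feasible $y$ for \eqref{avoid_sys}.

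Finally, I would bound the norm by summing coordinate-wise: using $\lceil d_i/2 \rceil \leq (d_i+1)/2$ for each integer $d_i \geq 0$,
\[
    \norm{y}_1 \leq \sum_{i=1}^n \lceil d_i/2 \rceil \leq \sum_{i=1}^n \frac{d_i+1}{2} = \frac{m+n}{2}.
\]
The argument is short and the only place one must be careful is the verification that the top-degree homogeneous component of $f$ is nonzero; this is really the main content of the proof, and it follows from the hypothesis $a \neq \BZero$ together with the integral-domain property of $\QQ[x_1,\dots,x_n]$.
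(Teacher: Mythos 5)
Your proposal is correct and follows essentially the same route as the paper: the same product polynomial, the same application of the Combinatorial Nullstellensatz to its top-degree monomial, and the same symmetric integer intervals $S_i$ with $\max_{s\in S_i}\abs{s}=\lceil d_i/2\rceil$ giving $\norm{y}_1 \leq (m+n)/2$. Your explicit verification that the leading homogeneous component is nonzero (via the integral-domain property) is a detail the paper leaves implicit, but it does not change the argument.
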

\begin{proof}
    Consider a polynomial
    \begin{equation}\label{polynew}
        f(x) = \prod\limits_{(a,a_0) \in \AC} \bigl(a^\top x - a_0\bigr).
    \end{equation} and let $x_1^{d_1} \cdot \ldots \cdot x_n^{d_n}$ be its monomial of the highest total degree. Clearly, $x$ is a solution \eqref{avoid_sys} iff $f(x) \not= 0$. Define the sets
    \begin{equation*}\label{deg_sets_S}
        S_i = \bigl\{-\lceil d_i/2 \rceil, \dots, -1, 0, 1, \dots, \lceil d_i/2 \rceil \bigr\},\quad\text{for $i \in \intint n$.}
    \end{equation*}
    Note that $\abs{S_i} \geq d_i + 1$, and, for any $y \in S_1 \times \ldots \times S_n$,
    \begin{equation*}
        \norm{y}_1 \leq \lceil d_1/2 \rceil + \ldots + \lceil d_n/2 \rceil \leq (m + n)/2. 
    \end{equation*}
    Thus \Cref{null_Alon_th} implies the proposition.
\end{proof}

Unfortunately, Combinatorial Nullstellensatz is an existence proof, and comes with no efficient algorithm to find the desired vector $y$ (see, e.g., Gnang \cite{AlonNull_Hardness}).
To this end, we propose a weaker variant of the Combinatorial Nullstellensatz, which, however, implies the same bound on the $\ell_1$-norm of $y$ in our application and, moreover, %only guarantees an existence of $y$ with $\norm{y}_1 \leq (m+n)/2$ rather than existence of $y \in S_1 \times \ldots \times S_n$. On the other hand, this new approach  
allows us to efficiently find such a vector $y$.

\begin{theorem}\label{nullalgo_Alon_th}
    Let $f \in \ZZ[x_1, \ldots, x_n]$, $m = \deg(f)$ and assume that for any tuple $(y_1, \dots, y_k)$ of $k$ integers, we can check if $f(y_1, \dots, y_k, x_{k+1}, \dots, x_n) \equiv 0$ using an oracle call.
    
    Then, there exists an algorithm, which computes a point $y \in \ZZ^n$ with $\norm{y}_1 \leq (m+n)/2$ and $f(y) \not= 0$. The algorithm uses at most $m + 2n$ calls to the oracle. Additionally, the solution $y$ generated by the algorithm is \emph{`lexicographically minimal'} in the following sense: $\abs{y_1}$ is minimized over all feasible points, $\abs{y_2}$ is minimized over all feasible points with $\abs{x_1} = \abs{y_1}$, and so on.
    % Moreover, we can assume that $\abs{y_1}$ is minimized over all such points.
\end{theorem}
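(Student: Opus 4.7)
The plan is to analyze the natural greedy procedure suggested by the lexicographic-minimality condition: for each $k = 1, \dots, n$ in turn, enumerate candidate integer values for $y_k$ in the order $0, 1, -1, 2, -2, \dots$, query the oracle on $f(y_1, \dots, y_{k-1}, y_k, x_{k+1}, \dots, x_n)$, and accept the first value for which the oracle returns ``non-zero''. The lexicographic-minimality claim is then immediate from this enumeration order, because a value $y_k$ extends to a feasible tuple iff the polynomial $g_k(y_k, x_{k+1}, \dots, x_n)$ is not identically zero.

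Let $g_k(x_k, \dots, x_n) := f(y_1, \dots, y_{k-1}, x_k, \dots, x_n)$ and let $N_k$ denote the number of rejected candidates at step $k$. A direct inspection of the enumeration shows that step $k$ uses exactly $N_k + 1$ oracle queries and that the accepted value satisfies $\abs{y_k} = \lceil N_k/2 \rceil$. Both the norm bound $\norm{y}_1 \leq (m+n)/2$ and the oracle-call bound $\sum_k (N_k+1) \leq m + 2n$ therefore reduce to the single inequality $\sum_{k=1}^n N_k \leq m$, which is the whole content of the proof.

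The main lemma is a \emph{degree-drop inequality}: $\deg(g_{k+1}) \leq \deg(g_k) - N_k$. To establish it, view $g_k$ as a univariate polynomial in $x_k$ with coefficients in the integral domain $R := \QQ[x_{k+1}, \dots, x_n]$. The rejection condition ``$g_k(y^{(i)}, x_{k+1}, \dots, x_n) \equiv 0$'' is exactly the statement that $y^{(i)}$ is a root of this univariate polynomial over $R$. Because the rejected values are pairwise distinct integers, their differences are units in $R$, and iterated polynomial long division in $R[x_k]$ yields a factorization
\begin{equation*}
    g_k \;=\; \prod_{i=1}^{N_k}\bigl(x_k - y^{(i)}\bigr)\cdot H(x_k,x_{k+1},\dots,x_n),
\end{equation*}
with $H \in \QQ[x_k,\dots,x_n]$. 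Since total degree is additive under multiplication in the integral domain $\QQ[x_k,\dots,x_n]$, we get $\deg(H) = \deg(g_k) - N_k$. Substituting the accepted value $y_k$ multiplies $H(y_k, x_{k+1},\dots,x_n)$ by the nonzero rational constant $\prod_i(y_k - y^{(i)})$, so $\deg(g_{k+1}) \leq \deg(H) = \deg(g_k) - N_k$.

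Telescoping over $k = 1,\dots,n$ gives $\deg(g_{n+1}) \leq m - \sum_k N_k$. Since the algorithm guarantees that $g_{n+1} = f(y_1,\dots,y_n)$ is a nonzero integer, $\deg(g_{n+1}) \geq 0$, whence $\sum_k N_k \leq m$ as required. I expect the main obstacle to be this degree-drop lemma, and in particular the passage from the polynomial identity ``$y^{(i)}$ kills $g_k$ as a polynomial in $x_{k+1},\dots,x_n$'' to an honest factorization with $\prod_i(x_k - y^{(i)})$ as a genuine divisor: this is what forces one to work over the multivariate coefficient domain $R$ rather than merely over $\QQ$ or $\ZZ$, and to invoke the unit-ness of nonzero integers in $R$ to chain single-root divisions into a product.
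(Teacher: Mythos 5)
Your proposal is correct and follows essentially the same route as the paper: the same greedy variable-by-variable elimination with candidates ordered by increasing absolute value, and the same key degree-drop argument via divisibility of $g_k$ by the linear factors $(x_k - y^{(i)})$ coming from rejected values. The only (harmless) difference is bookkeeping: you count rejections $N_k$ and factor out all $N_k$ rejected roots, while the paper tracks $2\abs{y_k}-1$ and factors out only the symmetric interval of rejected values; both yield $\sum_k(2\abs{y_k}-1)\le m$ and hence the stated bounds.
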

\begin{proof}
    The algorithm consists of $n$ elementary steps, each of which eliminates one of the variables $x_i$. The first step eliminates $x_1$, the second step eliminates $x_2$, and so on. Consider the first step. Denote $S = \intint[-\lceil m/2 \rceil]{\lceil m/2 \rceil}$. For each  $y_1 \in S$, ordered in the increasing order of their absolute values, we check if  $f(y_1, x_2, \dots, x_n) \equiv 0$. Take the first value of $y_1$ such that this is not satisfied. Note that each new value of $y_1$ for which this polynomial is identically $0$, gives a new linear multiple $(x_1-y_1)$ of $f$. Since $\abs{S} > m$ and the degree of $f$ is $m$, such $y_1$ exists. %Note that we need at most $2\abs{y_1} + 1$ oracle calls to find $y_1$. 
    % Arguing indirectly, assume that $f(y_1, x_2, \dots, x_n) \equiv 0$ for all $y_1 \in S$. Then, $f(x) = g(x) \cdot \prod_{y_1 \in S} (x_1 - y_1)$, for some $g(x) \in \ZZ[x_1, \ldots, x_n]$. Since $\abs{S_1} \geq d_1 + 1$, each monomial of the maximum total degree in $f$ is divisible by $x_1^{d_1 + 1}$. This is a contradiction that proves the claim. 
    When such $y_1$ is found, we make a substitution $\hat f(x_2, \dots, x_n) = f(y_1, x_2, \dots, x_n)$. We claim that $\deg(\hat f) \leq m - (2\abs{y_1} -1)$. Indeed, the inequality is trivial when $y_1 = 0$. If $\abs{y_1} \geq 1$, then $f$ is divisible by 
    \begin{equation*}
        \prod\limits_{i = -\bigl(\abs{y_1}-1\bigr)}^{\abs{y_1}-1} (x_1 - i),
    \end{equation*}
    which proves the claim.
    
    The second step eliminates the variable $x_2$ in $\hat f$ using the same scheme. Acting in this manner, we find $y = (y_1, \dots, y_n)^\top \in S^n$ with $f(y_1, \dots, y_n) \not= 0$, and
    \begin{equation*}
        \sum\limits_{i = 1}^n (2 \abs{y_i} - 1) \leq m.
    \end{equation*}
    Rewriting the last inequality, we get $\norm{y}_1 \leq (m+n)/2$. In turn, the total number of oracle calls is bounded by $\sum\limits_{i = 1}^n (2 \abs{y_i} + 1) \leq m + 2n$.
\end{proof}

We shall apply \Cref{nullalgo_Alon_th} to the polynomial \eqref{polynew}. In that case, the whole procedure can be made very efficient.

\begin{theorem}\label{approx_alg_th}
    There exists an algorithm which computes a solution $y$ of the system \eqref{avoid_sys} such that the following claims are satisfied:
    \begin{enumerate}
        \item The computational complexity of the algorithm is $O(n \cdot m)$;
        \item $\norm{y}_1 \leq (m + n)/2$;
        \item The solution $y$ is \emph{`lexicographically minimal'} in the sense described in \Cref{nullalgo_Alon_th}.
        % $\abs{y_1} = \min\bigl\{\abs{x_1} \colon x \text{ is a solution of the system \eqref{avoid_sys}}\bigr\}$.
    \end{enumerate}
\end{theorem}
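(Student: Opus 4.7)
The plan is to apply Theorem~\ref{nullalgo_Alon_th} to the polynomial $f(x) = \prod_{(a,a_0) \in \AC}\bigl(a^\top x - a_0\bigr)$ from \eqref{polynew}, which has degree $m$. Properties 2 and 3 (the $\ell_1$-bound and lexicographic minimality) then follow immediately from that theorem, so the whole task reduces to implementing the identity-test oracle efficiently enough to achieve the claimed $O(n m)$ bound on arithmetic operations.

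The key structural observation is that $f$ is a product of affine-linear forms, so $f(y_1,\ldots,y_k,x_{k+1},\ldots,x_n) \equiv 0$ iff at least one factor becomes identically zero. After substituting $y_1,\ldots,y_{k-1}$, a factor indexed by $(a,a_0)$ has the shape $a_k x_k + \cdots + a_n x_n + c$, where $c = \sum_{i < k} a_i y_i - a_0$. At step $k$ this factor dies precisely when $a_{k+1} = \cdots = a_n = 0$ and $y_k = -c/a_k$. Factors still carrying a nonzero coefficient among $a_{k+1},\ldots,a_n$ cannot die at step $k$, and factors already reduced to a constant ($a_j = 0$ for all $j \geq k$) must have $c \neq 0$ by the running invariant that the current polynomial is not identically zero.

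Exploiting this, I would precompute once the index $\ell(a) = \max\{i : a_i \neq 0\}$ for every factor, bucket factors by $\ell$, and maintain a running accumulated constant $c$ per factor. At step $k$ the set of forbidden integer values of $y_k$ is obtained by scanning only the bucket $\ell(a) = k$ and collecting those $-c/a_k$ that happen to be integers. I then iterate $y_k = 0,\pm 1,\pm 2,\ldots$ in increasing $\abs{y_k}$ order, check membership in the forbidden set via a hash table, and accept the first admissible value. After accepting $y_k$, I update $c \leftarrow c + a_k y_k$ for every factor with $a_k \neq 0$.

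The complexity accounting decomposes as follows. Updating the running constants contributes $O(\abs{\supp(a)})$ per factor across all steps, totalling $O(n m)$. Each factor contributes to exactly one bucket, so if $s_k = \abs{\{(a,a_0) : \ell(a)=k\}}$, then $\sum_k s_k \leq m$ and the forbidden sets are built in $O(m)$ total. The membership search at step $k$ requires $O(\abs{y_k}+1)$ hash lookups, summing by property 2 to $O(\norm{y}_1 + n) = O(m+n)$. Combining these yields the $O(n m)$ bound. The main subtlety I expect is bookkeeping rather than conceptual: one must maintain the invariant that the polynomial obtained after the current partial substitution is not identically zero, so that the case analysis above correctly enumerates the factors capable of dying at step $k$; once this is in place, property 3 is immediate from the greedy choice of $y_k$ in increasing order of absolute value.
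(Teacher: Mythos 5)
Your proposal is correct and follows essentially the same route as the paper: you apply Theorem~\ref{nullalgo_Alon_th} to the product of linear forms \eqref{polynew} and observe that a factor can only become identically zero at step $k$ once it has been reduced to a univariate linear form in $x_k$, so the identity-test oracle amounts to scanning exactly those factors; your bucketing by $\ell(a)=\max\{i \colon a_i\neq 0\}$ is equivalent to the paper's check that the linked list of surviving nonzero coefficients is the singleton $(k,a_k)$, and the constant-updating cost is accounted identically. The only cosmetic difference is that the paper uses an indicator array over $S=\intint[-\lceil m/2\rceil]{\lceil m/2\rceil}$ where you use a hash table; the array keeps the algorithm deterministic (as the paper emphasizes) at no extra asymptotic cost, since the overall bound is dominated by the $O(n\cdot m)$ preprocessing and updates anyway.
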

\begin{proof}
Consider the polynomial \eqref{polynew}.
%\begin{equation*}
%    f(x) = \prod\limits_{(a,a_0) \in \AC} \bigl(a^\top x - a_0\bigr).
%\end{equation*}
Again, $x$ is a solution \eqref{avoid_sys} iff $f(x) \not= 0$, and $\deg(f) = m$. By \Cref{nullalgo_Alon_th}, the only thing that we need to check is that all oracle calls during a single step of the algorithm shall cost us $O(m)$ operations. Indeed, at each step we shall maintain a list of linear monomials, and change each monomial by substituting the values of $x_1,$ $x_2,$ etc. Checking if each monomial is identically $0$ after a given substitution requires $O(1)$ operations, and the polynomial $f$ is not identically $0$ iff each of the monomials is not identically $0$.

The detailed description is given as follows. For any $k \in \intint n$, at the beginning of the $k$-th step, we maintain the polynomial $\hat f \in \ZZ[x_k, \dots, x_n]$ represented by a finite set of integer vectors $\widehat\AC \subseteq \ZZ^{n-k+2}$ such that
\begin{equation*}
    \hat f(x) = \prod\limits_{(a,a_0) \in \widehat\AC} (a^\top x - a_0).
\end{equation*}
For each vector $(a,a_0) \in \widehat\AC$, we store a linked list $\LC_{(a,a_0)}$ of non-zero elements of $a$. More precisely, $\LC_{(a,a_0)}$ consists of the pairs $(i, a_i)$ for each nonzero $a_i$, $i \in \intint[k]{n}$. The pairs inside $\LC_{(a,a_0)}$ are stored in the increasing order with respect to the first element of a pair. Initially, at the beginning of the first step, we just assign $\hat f \gets f$ and $\widehat\AC \gets \AC$. Clearly, for all $(a,a_0) \in \widehat\AC$, the lists $\LC_{(a,a_0)}$ can be initialized with $O(n \cdot m)$ operations.

We maintain the following invariant. For each $k \in \intint n$, after the $k$-th step has been completed, the following conditions have to be satisfied:
\begin{enumerate}
    \item For some integer constant $C$, \begin{equation*}
        f(y_1, \dots, y_k, x_{k+1}, \dots, x_n) = C \cdot \hat f(x_k, \dots, x_n) \not\equiv 0;
    \end{equation*}

    \item For each $(a,a_0) \in \widehat\AC$, the list $\LC_{(a,a_0)}$ is nonempty.
\end{enumerate}

Now, let us describe how the $k$-th step can be performed. Denote $S = \intint[-\lceil m/2 \rceil]{\lceil m/2 \rceil}$. The step is aimed to find $y_k \in S$ such that $\hat f(y_k, x_{k+1}, \dots, x_n) \not\equiv 0$, and $\abs{y_k}$ is minimized. To this end, we construct an indicator vector $v \in \{0,1\}^{S}$, whose elements are indexed by the elements of $S$, such that $v_i = 1$ iff $\hat f(i, x_{k+1}, \dots, x_n) \equiv 0$. Assuming that $v$ is constructed, we can just set $y_k \gets i$ for $i \in S$ with $v_i = 0$ and having a minimum absolute value. Due to the reasoning from the first part of the proof, such $i$ is always existing. Note that such $i$ can be computed with $O(m)$ operations. Let us describe, how to construct the vector $v$. To this end, we search for liner multipliers of $\hat f$ of the form $a_k \cdot x_k - a_0$. If such multiplier has found, and if $a_0/a_k \in S$, then we set $v_{i} \gets 1$, where $i = a_0/a_k$. To perform such kind of search, we just need to scan over all elements $(a,a_0) \in \widehat\AC$ with $\LC_{(a,a_0)}$ consisting of only a single element $(k,a_k)$. Clearly, the latter can be done with $O\bigl(\abs{\widehat\AC}\bigr) = O(m)$ operations. 

Before finishing the $k$-th step, we need to maintain the invariant. By other words, we need to perform the substitution of $x_k \gets y_k$ inside $\hat f$. To this end, for each $(a,a_0) \in \widehat\AC$, if the first element of $\LC_{(a,a_0)}$ is $(k, a_k)$, we set $a_0 \gets a_0 - a_k \cdot y_k$ and remove $(k, a_k)$ from the list $\LC_{(a,a_0)}$. If $\LC_{(a,a_0)}$ becomes empty, we remove $\LC_{(a,a_0)}$ together with the element $(a,a_0)$ of $\widehat\AC$. Clearly, this work can be done with $O\bigl(\abs{\widehat\AC}\bigr) = O(m)$ operations. Therefore, the algorithm consists of $O(n \cdot m)$-time preprocessing, and $n$ steps consisting of $O(m)$ operations, which gives the desired complexity bound.
\end{proof}

\section{Computational Complexity of the Exact Solution}\label{sec:exact}

% \begin{theorem}\label{avoid_exp_algo_th}
%     For $p = 1$, the problem \eqref{avoid_prob} can be solved with $O\bigl( 2^{n} \cdot n \cdot m \bigr)$ operations. 
% \end{theorem}
% \begin{proof}
%     Consider the change of variables $(x_1, x_2, \dots, x_n) \to (y, x_2, \dots, x_n)$ given by the formula $y = \pm x_1 \pm x_2 \pm \ldots \pm x_n$, for some fixed sequence of signs. Since this change of variables is unimodular (i.e., it defines a bijection $\ZZ^n \leftrightarrow \ZZ^n$), it transforms the system \eqref{avoid_sys} to an equivalent system depending on the variables $(y, x_2, \dots, x_n)$. Now, we consider the problem of minimizing $\abs{y}$ over this system. Due to Theorem \ref{approx_alg_th}, this problem can be solved with $O(n \cdot m)$ operations. Finally, we consider $2^n$ such problems corresponding to the different sign patterns inside the formula $y = \pm x_1 \pm x_2 \pm \ldots \pm x_n$. To finish the proof, one may note that the optimal objective of the original problem is the minimum between the objectives of the considered problems.
% \end{proof}
% Note that the existence of an algorithm of complexity $2^{o(n)} \cdot m$ would contradict to the ETH (Exponential Time Hypothesis). Note additionally that no similar result regarding the norm $\norm{\cdot}_\infty$ is known.

\begin{lemma}\label{avoid_inf_NPhard_lm}
    For $p = \infty$, the homogeneous form of the problem \eqref{avoid_prob} is $\NP$-hard.
\end{lemma}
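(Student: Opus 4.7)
The plan is a polynomial-time Karp reduction from \textsc{3-SAT}. Given a formula $\phi$ on variables $X_1,\dots,X_n$ with clauses $C_1,\dots,C_m$, I will build a set $\AC \subseteq \ZZ^{n+1}$ of homogeneous vectors such that the system \eqref{avoid_sys} has a solution $z$ with $\norm{z}_\infty \leq 1$ iff $\phi$ is satisfiable. Since any feasible $z$ is nonzero in the homogeneous case, and the bound $\norm{z}_\infty \leq 1$ forces every coordinate into $\{-1,0,+1\}$, I may search for $z \in \{-1,0,+1\}^{n+1} \setminus \{\BZero\}$.

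The construction uses an extra ``sign-reference'' coordinate $y_0$, writing $z = (y_0, x_1, \dots, x_n)$, so that boolean assignments can be encoded while keeping all constraints homogeneous. First, for each $i \in \intint[0]{n}$ I add the coordinate hyperplane $e_i^\top z = 0$ to $\AC$; these $n+1$ constraints force every coordinate of a feasible $z$ to lie in $\{-1,+1\}$. By the antipodal symmetry $z \leftrightarrow -z$, which preserves the homogeneous system $\AC$ and the norm, I may fix $y_0 = +1$, and interpret $x_i = +1$ as ``$X_i$ true''. Under this encoding, a literal $\ell_j$ is \emph{false} iff $\sigma_j x_{i_j} = -y_0$, where $\sigma_j \in \{\pm 1\}$ records the polarity of $\ell_j$.

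For each clause $C = \ell_{i_1} \vee \ell_{i_2} \vee \ell_{i_3}$ I therefore add to $\AC$ the hyperplane
\[
    \sigma_1\, x_{i_1} + \sigma_2\, x_{i_2} + \sigma_3\, x_{i_3} + 3\, y_0 = 0.
\]
On $\{-1,+1\}^{n+1}$ the left-hand side takes values in $\{-6,-4,-2,0,2,4,6\}$ and vanishes exactly when all three terms $\sigma_j x_{i_j}$ equal $-y_0$, i.e.\ exactly when $C$ is violated. Hence a $\pm 1$-vector $z$ avoids this hyperplane iff the induced assignment satisfies $C$. Combined with the coordinate hyperplanes, a feasible $z$ with $\norm{z}_\infty \leq 1$ exists iff $\phi$ is satisfiable, giving the desired reduction; the total size $\abs{\AC} = n + 1 + m$ is clearly polynomial in the input.

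The main conceptual obstacle, and the only point needing a trick, is the strict homogeneity ($a_0 = 0$) of the system: the naive encoding of a clause uses a non-homogeneous constraint of the form $\sum \sigma_j x_{i_j} \ne -3$. Introducing the extra coordinate $y_0$ with coefficient $3$, and fixing its sign via the $z \leftrightarrow -z$ symmetry, is exactly what converts this affine constraint into a purely homogeneous one. The remaining verifications — nonvanishing of the clause hyperplane on all satisfying $\pm 1$ assignments, and polynomial bookkeeping — are routine case analyses.
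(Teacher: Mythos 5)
Your reduction is correct, but it takes a genuinely different route from the paper. You reduce from \textsc{3-SAT}: coordinate hyperplanes force a feasible point of $\ell_\infty$-norm at most $1$ into $\{-1,+1\}^{n+1}$, the antipodal symmetry of a homogeneous system lets you normalize the sign-reference coordinate $y_0=+1$, and the clause vector $\sigma_1 x_{i_1}+\sigma_2 x_{i_2}+\sigma_3 x_{i_3}+3y_0$ indeed vanishes on $\pm1$ vectors exactly when all three literals are false (the three-term sum lies in $\{-3,-1,1,3\}$ and can cancel the $\pm3$ only when all terms equal $-y_0$). All vectors in your $\AC$ are nonzero and all offsets are $0$, so the instance is a legitimate homogeneous one, and the equivalence ``optimum $\leq 1$ iff $\phi$ satisfiable'' follows. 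The paper instead reduces from the Vertex Chromatic Number problem: it encodes $\abs{x_u}\not=\abs{x_v}$ for each edge via the two homogeneous constraints $x_u-x_v\not=0$ and $x_u+x_v\not=0$, so that the optimal value of \eqref{avoid_prob} equals $\chi(G)-1$. The paper's reduction is shorter (no auxiliary coordinate, no normalization step) and ties the whole objective range to a hard graph invariant, which in principle also transfers inapproximability of the chromatic number; your reduction establishes the arguably sharper statement that the problem is already $\NP$-hard at the smallest nontrivial threshold, namely deciding whether a feasible point of $\ell_\infty$-norm $1$ exists. Both are complete proofs of the lemma.
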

\begin{proof}
     Let us reduce the classical $\NP$-hard \emph{Vertex Chromatic Number Problem} to our problem. Consider an arbitrary simple graph $G = (V, E)$ on $n$ vertices with $m$ edges. It is not hard to see that the Vertex Chromatic Number Problem with respect to $G$ can be formulated by the following way:
    \begin{align}
        &\norm{x}_\infty \to \min\notag\\
        &\begin{cases}
            \forall (u,v) \in E,\quad \abs{x_u} \not= \abs{x_v}\\
            x \in \mathbb{Z}^{V}.
        \end{cases}\label{chromnum_prob}
    \end{align}
    Here, for an optimal solution $x^*$ of the problem above, the chromatic number of $G$ is exactly $\norm{x^*}_{\infty} + 1$, and the colors are $\bigl\{0,1, \dots, \norm{x^*}_{\infty} \bigr\}$. A proper coloring  of the vertices is given by the vector $\abs{x^*}$. 
    
    The formulation \eqref{chromnum_prob} can be rewritten as
    \begin{align*}
        &\norm{x}_\infty \to \min\\
        &\begin{cases}
            \forall (u,v) \in E,\quad x_u - x_v \not= 0\\
            \forall (u,v) \in E,\quad x_u + x_v \not= 0\\
            x \in \mathbb{Z}^{V},
        \end{cases}
    \end{align*}
    This system is an instance of  \eqref{avoid_prob} for $p = \infty.$ It proves the $\NP$-hardness of the latter. 
    % That proves $\NP$-hardness of the latter.
\end{proof}

\begin{lemma}\label{avoid_finite_NPhard_lm}
    For any $p \in \RR_{\geq 1}$, the homogeneous form of the problem \eqref{avoid_prob} is $\NP$-hard.
\end{lemma}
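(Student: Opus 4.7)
The plan is to reuse the graph-coloring reduction employed in the proof of Lemma~\ref{avoid_inf_NPhard_lm} almost verbatim. From an input graph $G=(V,E)$, I keep the same homogeneous avoidance constraints $x_u - x_v \neq 0$ and $x_u + x_v \neq 0$ for every $(u,v)\in E$. A feasible integer vector $x\in\ZZ^V$ then encodes, up to sign, a proper coloring of $G$ via $c(v):=\abs{x_v}$ with color values in $\ZZ_{\geq 0}$, and
\begin{equation*}
    \norm{x}_p^p = \sum_{v\in V} c(v)^p.
\end{equation*}
Therefore the $\norm{\cdot}_p$-minimum coincides with $\min_c \sum_v c(v)^p$ taken over proper colorings $c:V\to\ZZ_{\geq 0}$ of $G$, that is, an \emph{Optimum Cost Chromatic Partition} (OCCP) problem on $G$ with cost $w(i)=i^p$ (and color $0$ allowed).

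For $p=1$, this OCCP is equivalent---via the shift $c\mapsto c+1$---to the classical \emph{Minimum Sum Coloring Problem} (MSCP) with an additive constant $\abs{V}$; since MSCP is known to be NP-hard, this already handles the case $p=1$.

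For general $p\in\RR_{\geq 1}$, my plan is to reduce the NP-hard $k$-COLORING decision problem (for some fixed $k\geq 3$) to the decision version of the above OCCP by means of a vertex blow-up. Specifically, replace every vertex $v\in V$ of $G$ by an independent set $V_v$ of size $t$ and every edge $(u,v)\in E$ by the complete bipartite graph between $V_u$ and $V_v$; additionally, append a padding graph (a disjoint union of copies of $K_k$ together with some auxiliary independent sets). Because each $V_v$ is independent in the blown-up graph $G'$ and $i\mapsto i^p$ is strictly increasing on $\ZZ_{\geq 0}$, every optimal proper coloring of $G'$ must assign every vertex within a single block $V_v$ the same (smallest available) color; hence the OCCP optimum on $G'$ equals $t\cdot f_p(G)$ up to a known constant, where $f_p(G)$ is the OCCP optimum on $G$. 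In the \emph{yes} case ($\chi(G)\leq k$) the padding enforces a balanced $k$-coloring with values in $\{0,1,\dots,k-1\}$, pinning $f_p(G)$ to a computable target; in the \emph{no} case ($\chi(G)>k$) at least one entire block is forced to carry a color of value $\geq k$, contributing an extra rigid cost of at least $t\cdot(k^p-(k-1)^p)>0$. Taking $t$ polynomial in $\abs{V}$ makes this gap exceed the rational granularity of the objective, so a separating threshold $T$ yields the desired Karp reduction.

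The main obstacle will be the design of the padding graph and the verification that it forces a controlled class-size distribution in the \emph{yes} case while preserving the rigid overshoot in the \emph{no} case. Once this padding is constructed correctly, the strict monotonicity $k^p>(k-1)^p$ combined with the polynomial amplification factor $t$ yields the required polynomial-size separation between the yes- and no-instances and completes the reduction.
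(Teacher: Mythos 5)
Your reduction for $p=1$ via minimum sum coloring is plausible, but for general $p\in\RR_{\geq 1}$ the argument has a genuine gap: the entire burden of the proof is carried by the ``padding graph,'' which you never construct. Without it, the separation fails. If $\chi(G)\leq k$, the quantity $f_p(G)=\min_c\sum_v c(v)^p$ can be anywhere between roughly $0$ and $\abs{V}\cdot(k-1)^p$ depending on how unbalanced the color classes of $G$ can be made (e.g.\ a bipartite graph can push almost all vertices into color $0$), while if $\chi(G)>k$ the only rigid consequence is that \emph{some} color of value $\geq k$ appears. These two ranges overlap badly, so no threshold $T$ separates yes- from no-instances unless the padding really does pin the yes-case cost to a single computable value; designing such a gadget and proving it forces a balanced class-size distribution is precisely the hard content of known OCCP-hardness proofs, and asserting it ``will be the main obstacle'' does not discharge it. A secondary issue: for irrational $p$ the objective $\sum_v c(v)^p$ is not rational, so the ``rational granularity'' you invoke to choose $T$ does not exist as stated; one would need an explicit, efficiently computable separation margin between the two cases.

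For comparison, the paper's proof sidesteps all of this with a much simpler reduction from \emph{Minimum Vertex Cover}: it imposes only the constraints $x_u+x_v\neq 0$ for each edge, observes that rounding any feasible $x^*$ to the $\{0,1\}$-vector $y^*$ with $y^*_i=1$ iff $x^*_i\neq 0$ preserves feasibility and does not increase $\norm{\cdot}_p^p$, and concludes that the optimum equals the minimum vertex cover size for every $p\geq 1$ simultaneously. That argument needs no gadgets, no case analysis on $p$, and no threshold arithmetic. I would recommend either completing your padding construction in full detail or switching to a reduction in which the optimal solution can be normalized to $\{0,1\}$ values, which is what makes the dependence on $p$ disappear.
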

\begin{proof}
     We reduce the classical $\NP$-hard \emph{Minimum Vertex Cover Problem} to our problem. Consider an arbitrary simple graph $G = (V, E)$ on $n$ vertices and with $m$ edges. We claim that the Minimum Vertex Cover Problem with respect to $G$ can be formulated as follows.
    \begin{align*}
        &\norm{x}_p^p \to \min\\
        &\begin{cases}
            \forall (u,v) \in E,\quad x_u + x_v\ne 0\\
            x \in \mathbb{Z}^{V}.
        \end{cases}
    \end{align*}
    Indeed, given an optimal solution  $x^*$ to this system,  construct a new vector $y^*$ such that $y^*_i = 0$ iff $x^*_i=0$ and $y^*_i = 1$ otherwise. First, note that $\norm{x^*}_p^p\ge \norm{y^*}_p^p$ and, second, that $y^*$ is also a solution to the system above. Indeed, if $x^*_u + x^*_v\ne 0$, then at least one of $x_u,x_v$ is not $0$, and thus $y^*_u + y^*_v\ge 1$. It implies that there is an optimal solution $x^*$ to the system above with $x^*_i\in \{0,1\}$ for all $i$. Finally, for $\{0,1\}$-solutions it is easy to see that any such solution corresponds to a vertex cover of $G$ and that, moreover, minimizing the $\ell_p$-norm is the same as minimizing the size of the cover. Since the displayed problem is an instance of \eqref{avoid_prob}, it proves $\NP$-hardness of the latter.
\end{proof}

\Cref{avoid_inf_NPhard_lm} and \Cref{avoid_finite_NPhard_lm} yield
\begin{theorem}\label{avoid_NPhard_th}
    For any $p \in \bigl(\RR_{\geq 1} \cup \{\infty\}\bigr)$, the problem \eqref{avoid_prob} is $\NP$-hard.
\end{theorem}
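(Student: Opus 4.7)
The statement is a direct consolidation of the two preceding lemmas, so my plan is simply to observe that together they exhaust the range of $p$ specified in the theorem. Concretely, Lemma~\ref{avoid_inf_NPhard_lm} establishes $\NP$-hardness of the homogeneous form of \eqref{avoid_prob} when $p = \infty$ via reduction from Vertex Chromatic Number, and Lemma~\ref{avoid_finite_NPhard_lm} does the same for every $p \in \RR_{\geq 1}$ via reduction from Minimum Vertex Cover. Since the domain in the theorem is exactly $\RR_{\geq 1} \cup \{\infty\}$, the two reductions jointly cover every case.

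To write the proof I would open by fixing an arbitrary $p \in \RR_{\geq 1} \cup \{\infty\}$, then split into the two cases $p = \infty$ and $p < \infty$, and in each case invoke the corresponding lemma. It is worth emphasizing that hardness already holds for the homogeneous form, i.e., when every $a_0 = 0$ in the instance of \eqref{avoid_prob}; this is a stronger statement than the theorem formally asserts, and it follows for free because both lemmas produce instances of this restricted type. I would record this strengthening in a remark immediately after the proof.

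Since the lemmas do all the actual combinatorial work, there is no real obstacle at the level of the theorem itself. The only thing to double-check is that the problem \eqref{avoid_prob} is stated with a specific norm parameter $p$ and that the two reductions, which both minimize $\norm{x}_p$ (rather than, say, $\norm{x}_p^p$), produce genuine instances of \eqref{avoid_prob} for the relevant $p$; this is immediate because $\norm{\cdot}_p$ and $\norm{\cdot}_p^p$ have the same minimizers over $\ZZ^n$. With that sanity check in place, the proof of Theorem~\ref{avoid_NPhard_th} is a one-line combination of Lemmas~\ref{avoid_inf_NPhard_lm} and~\ref{avoid_finite_NPhard_lm}.
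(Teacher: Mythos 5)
Your proposal matches the paper exactly: the theorem is stated there as an immediate consequence of Lemmas~\ref{avoid_inf_NPhard_lm} and~\ref{avoid_finite_NPhard_lm}, which together cover $p=\infty$ and all $p\in\RR_{\geq 1}$. Your added sanity check that minimizing $\norm{x}_p^p$ and $\norm{x}_p$ yield the same minimizers is correct and harmless.
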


\section{Experimental Evaluation}\label{sec:experiments}
In this section, we show some experimental comparison of the algorithm proposed by \Cref{approx_alg_th} with the uniform sampling of integer points inside the cross-polytope $r \cdot \BB_1$, where $r = \lceil (m + n)/2 \rceil$. For the experiments, we generate homogeneous systems \eqref{avoid_sys} with entries in $\intint[-10]{10}$. The sampling is performed by taking $100$ uniform samples inside $r \cdot \BB_1 \cap \ZZ^n$. Note, that the computational complexity of both approaches is $O(n \cdot m)$. Tables \ref{n_fixed_1000_tb} and \ref{m_fixed_2000_tb} show the average $\ell_1$-norm of the solution found by both algorithms. It is assumed that $n=1000$ is fixed for the first table and $m = 2000$ is fixed for the second table, respectively. 

% \begin{table}
% \TABLE
% {Summary of Data for Case Study\label{tab:data_summary}}
% {\begin{tabular}{@{}l@{\quad}c@{}}
% \hline\up 
% Parameter              & Value            \\ \hline\up 
% Number of Resources    & 5                \\ 
% Number of Facilities   & 10               \\
% Number of Scenarios    & 50               \\
% Demand Variability     & High             \\ 
% Budget Constraint      & \$1,000,000      \\
% Transportation Cost    & \$10 per unit   \\ 
% Facility Setup Cost    & \$5,000 per facility \down\\ \hline
% \end{tabular}}{Table Notes.}
% \end{table}
\begin{table}
    \centering
    \begin{tabular}{||c|c|c||}
    \hline
    \hline
     $m$ &  Sampling & Our algorithm \\
    \hline
    \hline
    2000 & 1496,8 & 30,4 \\
    \hline
    2100 & 1544,4 & 32 \\
    \hline
    2200 & 1594,6 & 30,2 \\
    \hline
    2300 & 1643,4 & 27,8 \\
    \hline
    2400 & 1695 & 31 \\
    \hline
    2500 & 1744 & 31,8 \\
    \hline
    2600 & 1794 & 35 \\
    \hline
    2700 & 1843,8 & 33,6 \\
    \hline
    2800 & 1893,6 & 37,4 \\
    \hline
    2900 & 1945,8 & 40 \\
    \hline
    3000 & 1994 & 34 \\
    \hline
    \hline
    \end{tabular}

    \caption{The dimension is fixed ($n = 1000$)}
    \label{n_fixed_1000_tb}
\end{table}

\begin{table}
    \centering
    \begin{tabular}{||c|c|c||}
    \hline
    \hline
     $n$ &  Sampling & Our algorithm \\
    \hline
    \hline
    500 & 1238,6 & 29 \\
    \hline
    550 & 1264,5 & 29 \\
    \hline
    600 & 1292,6 & 28,6 \\
    \hline
    650 & 1316,5 & 26,9 \\
    \hline
    700 & 1338,8 & 30,4 \\
    \hline
    750 & 1365,5 & 25,7 \\
    \hline
    800 & 1391,5 & 25,2 \\
    \hline
    850 & 1418,2 & 25 \\
    \hline
    900 & 1442,8 & 27,5 \\
    \hline
    950 & 1469,7 & 28,2 \\
    \hline
    1000 & 1492,9 & 29,3 \\
    \hline
    \hline
    \end{tabular}

    \caption{The number of hyperplanes is fixed ($m = 2000$)}
    \label{m_fixed_2000_tb}
\end{table}

\bibliography{parts/grib_biblio}

\appendix

\end{document}